\tikzset{
  treenode/.style = {align=center, inner sep=2pt, text centered,
    font=\sffamily},
  arn_r/.style = {treenode, circle, black, font=\sffamily\bfseries, draw=black,
    text width=1.5em},
    arn_t/.style = {treenode, circle, black, thick, double, font=\sffamily\bfseries, draw=black,
    text width=1.5em},
  every edge/.append style={anchor=south,auto=falseanchor=south,auto=false,font=3.5 em},
}
\newtheorem{lem}[theorem]{Lemma}
\renewenvironment{lemma}{\begin{lem}}{\end{lem}}
\crefname{lem}{Lemma}{Lemmas}
\def\dd{\mathinner{.\,.}}
\newcommand{\cO}{\mathcal{O}}
\newcommand{\SA}{\textsf{SA}}
\newcommand{\LCP}{\textsf{LCP}}
\newcommand{\PLCP}{\textsf{PLCP}}
\newcommand{\POS}{\textsf{P}}
\newcommand{\LCE}{\textsf{LCE}}
\newcommand{\lcp}{\textsf{lcp}}
\newcommand{\iSA}{\textsf{iSA}}
 \newcommand{\defproblem}[3]{
  \vspace{2mm}
\noindent\fbox{
  \begin{minipage}{0.96\textwidth}
  #1\\
  {\bf{Input:}} #2  \\
  {\bf{Output:}} #3
  \end{minipage}
  }
  \vspace{2mm}
}
\begin{document}
\title{Longest Common Prefixes with $k$-Errors and Applications}

\author{
Lorraine A.K. Ayad
\and
Panagiotis Charalampopoulos
\and
Costas S. Iliopoulos
\and
Solon P. Pissis
}

\institute{
    Department of Informatics, King's College London, London, UK\\
    \email{[lorraine.ayad,panagiotis.charalampopoulos,\\costas.iliopoulos,solon.pissis]@kcl.ac.uk}}

\date{}

\maketitle 
   
\begin{abstract}
Although real-world text datasets, such as DNA sequences, are far from being uniformly random, average-case string searching algorithms perform significantly better than worst-case ones in most applications of interest. In this paper, we study the problem of computing the longest prefix of each suffix of a given string of length $n$ over a constant-sized alphabet that occurs elsewhere in the string with $k$-errors. This problem has already been studied under the Hamming distance model. 
Our first result is an improvement upon the state-of-the-art average-case time complexity for {\em non-constant $k$} and using only {\em linear space} under the Hamming distance model. Notably, we show that our technique can be extended to the edit distance model with the same time and space complexities. Specifically, our algorithms run in $\cO(n \log^k n \log \log n)$ time on average using $\cO(n)$ space. We show that our technique is applicable to several algorithmic problems in computational biology and elsewhere. 
\end{abstract}

\section{Introduction}\label{sec:intro}

The longest common prefix (LCP) array is a commonly used data structure alongside the suffix array (SA). The LCP array stores the length of the longest common prefix between two adjacent suffixes of a given string as they are stored (in lexicographical order) in the SA~\cite{SA}. A typical use combining the SA and the LCP array is to simulate the suffix tree functionality using less space~\cite{eSA}.

However, there are many practical scenarios where the LCP array may be applied without making use of the SA. The LCP array provides us with essential information regarding {\em repetitiveness} in a given string and is therefore a useful data structure for analysing textual data in areas such as molecular biology, musicology, or natural language processing (see~\cite{Manzini:2015:LCP:2952649.2952678} for some applications).

It is also quite common to account for potential alterations within textual data (sequences). For example, they can be the result of DNA replication or sequencing errors in DNA sequences. In this context, it is natural to define the longest common prefix with $k$-errors. Given a string $x[0 \dd n-1]$, the longest common prefix with $k$-errors for every suffix $x[i\dd n-1]$ is the length of the longest common prefix of $x[i\dd n-1]$ and any $x[j\dd n-1]$, where $j \neq i$, with applying up to $k$ substitution operations~\cite{Manzini:2015:LCP:2952649.2952678}. Some applications are given below.

\paragraph{Interspersed Repeats.} Repeated sequences are a common feature of genomes. One type in particular, interspersed repeats, are known to occur in all eukaryotic genomes. These repeats have no repetitive pattern and appear irregularly within DNA sequences \cite{mreps}. Single nucleotide polymorphisms result in the existence of interspersed repeats that are not identical~\cite{bioBk}. Identifying these repeats has been linked to genome folding locations and phylogenetic analysis~\cite{repeats}.

\paragraph{Genome Mappability Data Structure.} In~\cite{Sofsem2018} the authors showed that using the longest common prefixes with $k$-errors they can construct, in $\cO(n)$ worst-case time, an $\cO(n)$-sized data structure answering the following type of queries in $\cO(1)$ time per query: find the smallest $m$ such that at least $\mu$ of the substrings of $x$ of length $m$ do not occur more than once in $x$ with at most $k$ errors. This is a data structure version of the genome mappability problem~\cite{biopaper,Manzini:2015:LCP:2952649.2952678,1map}.

\paragraph{Longest Common Substring with $k$-Errors.} The longest common substring with $k$-errors problem has received much attention recently, in particular due to its applications in computational biology~\cite{DBLP:journals/jcb/UlitskyBTC06,kmacs,ALFRED}. We are asked to find the longest substrings of two strings that are at distance at most $k$. The notion of longest common prefix with $k$-errors is thus {\em closely related} to the notion of longest common substring with $k$-errors. We refer the interested reader to~\cite{Abboud:2015:MAP:2722129.2722146,DBLP:journals/ipl/FlouriGKU15,DBLP:journals/ipl/Grabowski15,tania_arxiv,DBLP:journals/jcb/ThankachanAA16}.

\paragraph{All-Pairs Suffix/Prefix Overlaps with $k$-Errors.} Finding approximate overlaps is the first stage of most genome assembly methods. Given a set of strings and an error-rate $\epsilon$, the goal is to find, for all pairs of strings, their suffix/prefix matches (overlaps) that are within distance $k=\lceil \epsilon\ell \rceil$, where $\ell$ is the length of the overlap~\cite{DBLP:journals/jcb/RasmussenSM06,DBLP:journals/iandc/ValimakiLM12,DBLP:conf/spire/KucherovT14}. By concatenating the strings to form one single string $x$ and then computing longest common prefixes with $k$-errors for $x$ only against the prefixes of the strings we have all the information we need to solve this problem.

\paragraph{Our Model.}
We assume the standard word-RAM model with word size $w = \Omega(\log n)$.
Although real-world text datasets are far from being uniformly random, average-case string searching algorithms perform significantly better than worst-case ones in most applications of interest. We are thus interested in the average-case behaviour of our algorithms.
When we state {\em average-case} time complexities for our algorithms, we assume that the input is a string $x$ of length $n$ over an alphabet $\Sigma$ of size $\sigma>1$ with the letters of $x$ being independent and identically distributed random variables, uniformly distributed over $\Sigma$. In the context of molecular biology we typically have $\Sigma=\{\texttt{A,C,G,T}\}$ and so we assume $\sigma=\cO(1)$.

\paragraph{Related Works.} The problem of computing longest common prefixes with $k$-errors was first studied by Manzini for $k=1$ in~\cite{Manzini:2015:LCP:2952649.2952678}. We distinguish the following techniques that can be applied to solve this and other related problems.
\begin{description}
\item[Non-constant $k$ and $\omega(n)$ space:] In this case, we can make use of the well-known data structure by Cole et al~\cite{Cole:2004:DMI:1007352.1007374}. The size of the data structure is $\cO(n\frac{(c\log n)^k}{k!})$, where $c>1$ is a constant.  
\item[Constant $k$ and $\cO(n)$ space:] In this case, we can make use of the technique by Thankachan et al~\cite{DBLP:journals/jcb/ThankachanAA16} which builds heavily on the data structure by Cole et al. The working space is {\em exponential in $k$} but $\cO(n)$ for $k=\cO(1)$.
\item[Non-constant $k$ and $\cO(n)$ space:] In this case, there exists a simple $\cO(n^2k)$-time worst-case algorithm to solve the problem. The best-known average-case algorithm was presented in~\cite{Sofsem2018}. It requires $\cO(n (\sigma R)^k \log \log n (\log k+ \log \log n) )$ time on average, where $R=\lceil (k+2) (\log_{\sigma} n+1) \rceil$.
\item[Other related works:] In~\cite{tania_arxiv} it was shown that a strongly subquadratic-time algorithm for the longest common substring with $k$-errors problem, for $k=\Omega(\log n)$ and binary strings, refutes the Strong Exponential Time Hypothesis. Thus subquadratic-time solutions for approximate variants of the problem have been developed~\cite{tania_arxiv}. A non-deterministic algorithm is also known~\cite{Abboud:2015:MAP:2722129.2722146}.
\end{description}

\paragraph{Our Contribution.}  In this paper, we continue the line of research for non-constant $k$ and $\cO(n)$ space to investigate the limits of computation in the average-case setting; in particular in light of the worst-case lower bound shown in~\cite{tania_arxiv}. We make the following threefold contribution.
\begin{enumerate}
\item We first show a non-trivial upper bound of independent interest: the expected length of the maximal longest common prefix with $k$-errors between a pair of suffixes of $x$ is $\cO(\log_\sigma n)$ when $k \leq \frac{\log n}{\log\log n}$.
\item By applying this result, we significantly improve upon the state-of-the-art algorithm for non-constant $k$ and using $\cO(n)$ space~\cite{Sofsem2018}. Specifically, our algorithm runs in $\cO(n \log^k n \log \log n)$ time on average using $\cO(n)$ space.
\item Notably, we extend our results to the {\em edit distance} model with no extra cost thus solving the genome mappability data structure problem, the longest common substring with $k$-errors problem, and the all-pairs suffix/prefix overlaps with $k$-errors problem in {\em strongly sub-quadratic} time for $k \leq \frac{\log n}{\log\log n}$.
\end{enumerate}

\section{Preliminaries}\label{sec:prel}

We begin with some basic definitions and notation.  Let $x=x[0]x[1]\ldots x[n-1]$ be a \textit{string} of length $|x|=n$ over a finite ordered alphabet $\Sigma$ of size $|\Sigma|=\sigma=\cO(1)$. 
For two positions $i$ and $j$ on $x$, we denote by $x[i\dd j]=x[i]\ldots x[j]$ the \textit{substring} 
(sometimes called \textit{factor}) of $x$ that 
starts at position $i$ and ends at position $j$. 
We recall that a {\em prefix} of $x$ is a substring that starts at position 0 
($x[0\dd j]$) and a {\em suffix} of $x$ is a substring that ends at position $n-1$ 
($x[i\dd n-1]$). 

  Let $y$ be a string of length $m$ with $0<m\leq n$. 
  We say that there exists an \textit{occurrence} of $y$ in $x$, or, more 
simply, that $y$ \textit{occurs in} $x$, when $y$ is a substring of $x$.
  Every occurrence of $y$ can be characterised by a starting position in $x$. 
  We thus say that $y$ occurs at the \textit{starting position} $i$ in $x$ when $y=x[i \dd i + m - 1]$.
  
  The {\em Hamming distance} between two strings $x$ and $y$, with $|x| = |y|$, is defined as $d_H(x, y) = |\{i : x[i] \neq y[i],\, i = 0, 1,\ldots, |x| - 1\}|$. If $|x| \neq |y|$, we set $d_H(x, y)=\infty$. The \emph{edit distance} between $x$ and $y$ is the minimum total cost of a sequence of edit operations (insertions, deletions, substitutions) required to transform $x$ into $y$. It is known as \emph{Levenshtein distance} for unit cost operations. 
We consider this special case here. If two strings $x$ and $y$ are at (Hamming or edit) distance at most $k$ we say that $x$ and $y$ have {\em $k$-errors} or have {\em at most $k$ errors}.

We denote by \SA{} the {\em suffix array} of $x$. \SA{} is an integer array of size $n$ storing the starting positions of all (lexicographically) sorted non-empty suffixes of $x$, i.e.~for all 
$1 \leq  r < n$ we have $x[\SA{}[r-1] \dd n-1] < x[\SA{}[r] \dd n - 1]$~\cite{SA}.
  Let \lcp{}$(r, s)$ denote the length of the longest common prefix between
$x[\SA{}[r] \dd n - 1]$ and $x[\SA{}[s] \dd n - 1]$ 
for positions $r$, $s$ on $x$.
  We denote by \LCP{} the {\em longest common prefix} array of $x$ defined by 
\LCP{}$[r]=\lcp{}(r-1, r)$ for all $1 \leq r < n$, and 
\LCP{}$[0] = 0$. The inverse \iSA{} of the array \SA{} is defined by 
$\iSA{}[\SA{}[r]] = r$, for all $0 \leq r < n$. It is known that
  \SA{}, \iSA{}, and \LCP{} of a string of length $n$, over a constant-sized alphabet, can be computed in time and space $\cO(n)$~\cite{Nong:2009:LSA:1545013.1545570,indLCP}.
  It is then known that a range minimum query (RMQ) data structure over the \LCP{} array, that can be constructed in $\cO(n)$ time and $\cO(n)$ space~\cite{Bender2000}, can answer \lcp{}-queries in $\cO(1)$ time per query~\cite{SA}. The \lcp{} queries are also known as {\em longest common extension} (\LCE{}) queries.

The {\em permuted \LCP{} array}, denoted by \PLCP{}, has the same contents as the \LCP{} array but in different order. Let $i^{-}$ denote the starting position of the lexicographic predecessor of $x[i\dd n-1]$. For $i = 0,\ldots,n-1$, we define 
$\PLCP[i] = \LCP[\iSA[i]] = \lcp(\iSA[i^{-}], \iSA[i]])$,
that is, $\PLCP[i]$ is the length of the longest common prefix between $x[i\dd n-1]$ and its lexicographic predecessor. For the starting position $j$ of the lexicographically smallest suffix we set $\PLCP[j]=0$. For any $k \geq 0$, we define $\lcp_k(y, z)$ as the largest  $\ell \geq 0$ such that $y[0\dd \ell - 1]$ and $z[0\dd \ell-1]$ exist and are at {\em Hamming distance} at most $k$; note that this is defined for a pair of strings. We analogously define the {\em permuted \LCP{} array with $k$-errors}, denoted by $\PLCP_k$. For $i = 0,\ldots,n-1$, we have that $$\PLCP_k[i]=\max_{j=0,\ldots,n-1,~j \neq i} \lcp_k(x[i \dd n-1], x[j\dd n-1]).$$ The main computational problem in scope can be formally stated as follows.

{\defproblem{\textsc{$\PLCP$ with $k$-Errors}}{A string $x$ of length $n$ and an integer $0<k<n$}
{$\PLCP_k$ and $\POS_k$; $\POS_k[i]\neq i$, for $i = 0,\ldots,n-1$, is such that $x[i \dd i+\ell-1] \approx_k x[\POS_k[i] \dd \POS_k[i]+\ell-1]$, where $\ell=\PLCP_k[i]$}}

We assume that $k \leq \frac{\log n}{\log \log n}$ throughout, since all relevant time-complexities contain an $n\log ^k n$ factor and any larger $k$ would force this value to be $\Omega(n^2)$: $$n\log^k n \leq c n^2 \Leftrightarrow k\log\log n \leq \log (c n) \Leftrightarrow k \leq \frac{\log c+\log n}{\log\log n},\; c\geq 1.$$

\section{Computing $\PLCP_k$}~\label{sec:hamming}

In this section we propose a new algorithm for the \textsc{$\PLCP$ with $k$-Errors} problem under both the Hamming and the edit distance (Levenshtein distance) models. This algorithm is based on a deeper look into the expected behaviour of the longest common prefixes with $k$-errors. This in turn allows us to make use of the $y$-fast trie, an efficient data structure for maintaining integers from a bounded domain. We already know the following result for errors under the Hamming distance model.

\begin{theorem}[\cite{Sofsem2018}]\label{the:PLCP_old}
Problem \textsc{$\PLCP$ with $k$-Errors} for $1 \leq k \leq \frac{\log n}{\log \log n}$ can be solved in average-case time $\cO(n (\sigma R)^k \log^2 \log n)$, where $R=\lceil (k+2) (\log_{\sigma} n+1) \rceil$, using $\cO(n)$ extra space.
\end{theorem}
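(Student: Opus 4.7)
The plan is to partition positions of $x$ into two regimes by the threshold $R = \lceil (k+2)(\log_\sigma n + 1)\rceil$, handling separately the ``Long'' positions with $\PLCP_k[i] \geq R$ and the ``Short'' positions with $\PLCP_k[i] < R$. The choice of $R$ is motivated by a pigeonhole argument: any $k$-error match of length at least $R$ must contain an exact matching substring of length at least $\lceil R/(k+1)\rceil \geq \log_\sigma n + 1$, and in a uniformly random string the expected number of suffix pairs sharing an exact prefix of that length is $\cO(n)$. This budget gives us room for an $\cO(k)$-time verification per candidate in the Long regime and an $\cO((\sigma R)^k)$ factor in the Short regime.

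For Long PLCPs I would build the suffix tree of $x$ together with an $\cO(n)$-space \LCE{} data structure. Grouping suffixes by their length-$\lceil \log_\sigma n + 1 \rceil$ prefix (via \SA{} and \LCP{}) yields $\cO(n)$ candidate ordered pairs in expectation; for each pair $(i,j)$ one verifies $\lcp_k(x[i\dd n-1], x[j\dd n-1])$ by at most $k+1$ constant-time \LCE{} queries, then updates $\PLCP_k[i]$ and $\POS_k[i]$ accordingly. The total expected cost of this phase is $\cO(kn)$ in $\cO(n)$ space.

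For Short PLCPs, at each position $i$ I would enumerate every string obtainable from $x[i\dd i+R-1]$ by up to $k$ substitutions; there are $\sum_{j=0}^{k}\binom{R}{j}(\sigma-1)^j = \cO((\sigma R)^k)$ such variants. Packing each length-$R$ string into $\cO(1)$ machine words (using $R=\cO(\log_\sigma n)$ and $w=\Omega(\log n)$), I store integer encodings of the length-$R$ prefixes of all suffixes of $x$ in a $y$-fast trie supporting predecessor/successor queries in $\cO(\log\log n)$ time. For each variant I issue such a query, examine a constant number of candidate suffixes, and invoke the \LCE{} oracle on $x$ to measure the exact $k$-error LCP, keeping the best witness. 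Summed over all positions this comes to $\cO(n (\sigma R)^k \log^2\log n)$ time, the extra $\log\log n$ factor absorbing verification overhead and batched insertions/deletions within the $\cO(n)$ working-space budget.

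The main obstacle will be the Short phase: the $(\sigma R)^k$ variants per position cannot be materialized simultaneously, so I would process positions in batches and discard variants after resolving their queries; moreover, turning a $y$-fast trie predecessor/successor result into the true optimum $k$-error LCP for a given variant requires a careful interleaving with \LCE{} queries and tie-breaking across siblings in the trie. The hypothesis $k \leq \frac{\log n}{\log\log n}$ is used both to ensure that packed encodings of length-$R$ prefixes fit into $\cO(1)$ words and to keep the product $n (\sigma R)^k \log^2\log n$ strictly subquadratic.
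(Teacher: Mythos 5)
First, a framing point: the paper does not prove this statement --- it is imported wholesale from~\cite{Sofsem2018}, and the only trace of its intended proof here is an (implicit) split into ``long'' values ($\PLCP_k[i]\geq R$, computable in $\cO(kn)$ expected time) and ``short'' ones ($\PLCP_k[i]<R$). Your Long/Short decomposition with threshold $R$ is therefore the right skeleton, and your pigeonhole motivation for the choice of $R$ is the standard one. Two points, however, do not survive scrutiny. In the Long phase, the pigeonhole argument only guarantees an exact block of length at least $\log_\sigma n+1$ \emph{somewhere} among the $k+1$ blocks of the window, not at its start; grouping suffixes by their length-$(\log_\sigma n+1)$ \emph{prefix} therefore misses every pair whose first error occurs early, and you must repeat the grouping for each of the $k+1$ block offsets (which also changes the expected candidate count from $\cO(n)$ to $\cO(kn)$ and the verification cost accordingly).

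The more serious gap is in the Short phase. You assert $R=\cO(\log_\sigma n)$ so that a length-$R$ window packs into $\cO(1)$ machine words; but $R=\lceil(k+2)(\log_{\sigma} n+1)\rceil=\Theta(k\log_\sigma n)$, which for non-constant $k$ occupies $\Theta(k)$ words. Consequently the $y$-fast-trie operations on these keys are not $\cO(\log\log n)$ each --- every hash, comparison, or \textsf{XOR} on a $\Theta(k)$-word key costs $\Theta(k)$ time --- and the claimed $\cO(n(\sigma R)^k\log^2\log n)$ bound is not justified by your accounting. This is precisely the obstruction that the present paper's contribution is designed to remove: Theorem~\ref{cyril2} shows that the expected maximal $k$-error LCP is $\cO(\log_\sigma n)$, which is what allows the new algorithm of Section~\ref{sec:yfast} to work with windows of length $\lambda=\alpha\log n$ that genuinely fit in a single word, so that $y$-fast-trie queries really do cost $\cO(\log\log n)$. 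In effect you have transplanted the new paper's machinery into the old theorem, where the window length is too large for it to apply; to recover the cited bound you would need the actual short-phase technique of~\cite{Sofsem2018}, or at least an argument that handles multi-word keys without incurring a factor-$k$ penalty.
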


In the rest of this section, we show the following result for errors under both the Hamming and the edit distance models.

\begin{restatable}{theorem}{PLCPnew}\label{the:PLCP_new}
Problem \textsc{$\PLCP$ with $k$-Errors} can be solved in average-case time $\cO(n \frac{c^k}{k!} \log^k n \log\log n)$, where $c$ is a constant, using $\cO(n)$ extra space.
\end{restatable}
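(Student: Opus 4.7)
The plan is to improve Theorem~\ref{the:PLCP_old} by establishing a sharper bound on the maximum PLCP with $k$-errors in a random string, and by replacing the data-structural machinery of~\cite{Sofsem2018} with a y-fast trie that costs only $\cO(\log\log n)$ per navigation step. First, I would prove a probabilistic lemma: for a uniformly random string over $\Sigma$, the probability that two fixed suffixes share a prefix of length $\ell$ at Hamming distance at most $k$ is at most $\cO(\binom{\ell}{k} \sigma^{k-\ell})$. Union-bounding over the $\binom{n}{2}$ suffix pairs and choosing $R := c_0 \log_\sigma n$ for a sufficiently large constant $c_0$, the probability that any PLCP exceeds $R$ is at most $n^{-\Omega(1)}$ provided $k \leq \log n/\log\log n$. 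An analogous bound (with an additional combinatorial factor absorbed into the constant) handles the edit-distance case; in particular, the expected maximum PLCP is $\cO(\log_\sigma n)$.

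Given this bound, the algorithm has two parts. The \emph{long-PLCP} case ($\PLCP_k[i] \geq R$) is handled in $\cO(n)$ average time by the technique implicit in~\cite{1map,Sofsem2018}; since such positions exist with probability only $n^{-\Omega(1)}$, any reasonable fallback (e.g.\ the $\cO(n^2 k)$ worst-case algorithm) contributes $o(1)$ to the expected time. For the \emph{short-PLCP} case, I would process each position $i$ by a DFS over an implicit branching tree whose nodes are triples $(\ell, k', \text{SA range})$: $\ell$ is the matched length, $k'$ the errors spent, and the SA range consists of starting positions $j$ consistent with the current partial match. At each node, one $\LCE$ query extends the matched length up to the next mismatch; the search then branches on how the next error is charged (a substitution under Hamming, plus insertion and deletion under edit distance), increments $k'$, and updates the SA range by a successor query. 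A y-fast trie over positions, keyed by $\iSA$ and restricted to the current range, supports each such query in $\cO(\log\log n)$ time.

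The key counting step is bounding the number of leaves of the DFS per position $i$. Since each leaf is determined by an \emph{unordered} choice of at most $k$ error positions inside a window of length $R$ plus a constant number of decisions per error, the number of leaves is $\cO(\binom{R}{k} b^k) = \cO(c^k \log^k n / k!)$ for constants $b,c$. Each leaf costs $\cO(\log\log n)$, yielding $\cO((c^k/k!) \log^k n \log\log n)$ per position and $\cO(n(c^k/k!) \log^k n \log\log n)$ overall, in $\cO(n)$ space.

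The main obstacle is securing the $1/k!$ factor: a naive DFS that branches independently at each of $k$ error positions in a window of size $R$ would enumerate ordered tuples, giving an $R^k$ term. I would enforce the unordered count by always branching ``forward'' along the matched prefix, so that error positions are spawned in strictly increasing order and each unordered error pattern corresponds to a unique leaf. The edit-distance adaptation additionally requires tracking the diagonal between the two suffix pointers (the net shift caused by insertions minus deletions) so that $\LCE$ queries can still be answered in $\cO(1)$ via kangaroo jumps over a generalised suffix tree of $x$ and $\rev(x)$; with this bookkeeping the per-error branching factor remains constant and all complexity bounds carry through unchanged.
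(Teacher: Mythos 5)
Your overall strategy is essentially the paper's: the same probabilistic bound $\Pr(X_{i,j}\geq m)\leq \binom{m}{k}\sigma^{k-m+k}$ union-bounded over pairs to get an $\cO(\log_\sigma n)$ threshold, the same ``place the next error only at positions strictly after the previous error and no further than the current maximal match'' rule to collapse ordered tuples into the $\binom{R}{k}\sigma^{\cO(k)}=\cO(c^k\log^k n/k!)$ unordered count, and a $y$-fast trie to pay $\cO(\log\log n)$ per enumerated candidate. The one real mechanical difference is in the short case: the paper packs each length-$\lambda$ window of $x$ into a machine word, stores these integers in a single static $y$-fast trie, and answers ``longest prefix of the modified string $z$ occurring in $x$'' with one exact/predecessor/successor query plus an \textsf{XOR} trick, whereas you run a suffix-array-range DFS with \LCE{} extensions. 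Your variant is workable, but the data structure you name for it does not do what you ask of it: a $y$-fast trie keyed by $\iSA$ does not support ``successor restricted to the current SA range,'' and narrowing an SA interval to the sub-interval of suffixes whose next character is $a$ is a suffix-tree child operation, not a predecessor query. For constant $\sigma$ this step is $\cO(1)$ with an ordinary suffix tree, so the claimed bound survives, but as written that step is unsupported.

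The genuine gap is in the edit-distance extension. Under edit distance the problem as you have implicitly defined it is degenerate: $x[i\dd n-1]$ and $x[i+j\dd n-1]$ are at edit distance $|j|$, so every position trivially has $\PLCP_k[i]\approx n-i-k$ witnessed by a neighbouring suffix. The definition must be changed so that the witness position is required to lie outside $S_{i,k}=\{i-k,\ldots,i+k\}$, and---more importantly for the proof---the algorithm must be adapted to enforce this: a trie/range query while processing position $i$ may well return a forbidden $j\in S_{i,k}$, and one cannot simply discard it without risking a wrong (too small) answer. The paper spends a concrete $\cO(k)$ per position precomputing, for each $j\in S_{i,k}$, a replacement position $f_j\notin S_{i,k}$ maximising the common prefix with $x[j\dd n-1]$, and substitutes $f_j$ whenever a forbidden witness is returned. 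Your proposal addresses the diagonal bookkeeping for \LCE{} queries but is silent on this exclusion, so the edit-distance half of the theorem does not yet follow from what you wrote. (Minor further omissions that the paper handles explicitly: ruling out $j=i$ as a witness in the zero-error initialisation, and the extension of matches past length $\lambda$ under edit distance, which needs a Landau--Vishkin-style $\cO(k^2)$ procedure rather than $k$ kangaroo jumps.)
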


For clarity of presentation, we first do the analysis and present the algorithm under the Hamming distance model in Sections~\ref{sec:exp} and~\ref{sec:yfast}. We then show how to extend our technique to work under the edit distance model in Section~\ref{sec:edit}.

\subsection{Expectations}\label{sec:exp}

The expected maximal value in the $\LCP{}$ array is $2\log_{\sigma} n + \cO(1)$~\cite{KGOTK83}.
We can thus obtain a trivial $\cO(k \log_{\sigma} n)$ bound on the expected length of the maximal longest common prefix with $k$-errors for arbitrary $k$ and $\sigma$. By looking deeper into the expected behaviour of the longest common prefixes with $k$-errors we show the following result of independent interest for when $k \leq \frac{\log n}{\log \log n}$.

\begin{theorem}\label{cyril2}
Let $x$ be a string of length $n$ over an alphabet of size $\sigma>1$
and $1 \leq k \leq \frac{\log n}{\log \log n}$ be an integer.

\begin{itemize}
\item[(a)]The expected length of the maximal longest common prefix with $k$-errors between a pair of suffixes of $x$ is $\cO(\log_{\sigma} n)$.

\item[(b)]There exists a constant $\alpha$ such that the expected number of pairs of suffixes of $x$ with a common prefix with $k$-errors of length at least $\alpha \log_\sigma n$ is $\cO(1)$.
\end{itemize}
\end{theorem}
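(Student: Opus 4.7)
The plan is a first-moment / union-bound argument. Let $X_{i,j}$ denote the length of the longest common prefix with $k$-errors between $x[i\dd n-1]$ and $x[j\dd n-1]$, and let $Y = \max_{i \neq j} X_{i,j}$. Part (b) amounts to bounding
\[
E[\#\{(i,j) : X_{i,j} \geq \alpha \log_\sigma n\}] = \sum_{i \neq j} \Pr[X_{i,j} \geq \alpha \log_\sigma n]
\]
by a constant, and part (a) will follow from writing $E[Y] = \sum_{m \geq 1} \Pr[Y \geq m]$, trivially bounding the head $m \leq \alpha \log_\sigma n$ by $\alpha \log_\sigma n$, and controlling the tail $m > \alpha \log_\sigma n$ through the same per-pair probabilities.

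The core ingredient is a per-pair estimate of the form
\[
\Pr[X_{i,j} \geq m] \leq 2 \binom{m}{k} (\sigma-1)^k \sigma^{-m}, \qquad k \leq m/2.
\]
The event $X_{i,j} \geq m$ forces the Hamming distance between $x[i\dd i+m-1]$ and $x[j\dd j+m-1]$ to be at most $k$. If $|i-j| \geq m$, the two factors are drawn from disjoint positions of $x$, so this Hamming distance is distributed as $\mathrm{Bin}(m, 1-1/\sigma)$ and the bound is immediate from the standard inequality $\sum_{i=0}^{k}\binom{m}{i}(\sigma-1)^i\sigma^{-m} \leq 2\binom{m}{k}(\sigma-1)^k\sigma^{-m}$. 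The harder case, and the main obstacle, is when the two factors overlap, i.e.~$d := |i-j| < m$: the mismatch indicators $Z_\ell := \mathbf{1}[x[i+\ell] \neq x[j+\ell]]$ are a priori correlated, since a single letter of $x$ may appear in several of them. My plan is to show that the $Z_\ell$ are nevertheless mutually independent by partitioning $\{0,\dots,m-1\}$ into residue classes modulo $d$; within a single class the comparisons form a chain $x[a] \stackrel{?}{\neq} x[a+d] \stackrel{?}{\neq} x[a+2d] \stackrel{?}{\neq} \cdots$ on pairwise distinct (hence i.i.d.) letters of $x$, and a short induction shows that the consecutive-inequality indicators along such a chain are mutually independent; different residue classes involve disjoint letters and are therefore independent of one another. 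The binomial tail bound above then applies verbatim.

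With the per-pair estimate in hand, the rest is a routine computation. Setting $m = \alpha \log_\sigma n$ and summing over the at most $n^2$ ordered pairs yields
\[
\sum_{i \neq j} \Pr[X_{i,j} \geq m] \leq 2 n^{2-\alpha} \sigma^{k} \binom{\alpha \log_\sigma n}{k}.
\]
Using $\binom{\alpha \log_\sigma n}{k} \leq (e\alpha \log_\sigma n/k)^k$ together with the hypothesis $k \leq \log n/\log\log n$, a brief Stirling-style check shows that both $\sigma^k$ and $(e\alpha \log_\sigma n/k)^k$ are $n^{o(1)}$, so the whole expression is $n^{2-\alpha+o(1)}$; any constant $\alpha > 2$ (e.g.~$\alpha = 3$) renders it $\cO(1)$, which is (b). For (a), the tail $\sum_{m > \alpha \log_\sigma n} \Pr[Y \geq m]$ is geometrically dominated by its first term, because the ratio of successive summands equals $(m+1)/((m+1-k)\sigma)$, which is bounded by a constant less than $1$ as soon as $m$ exceeds a small multiple of $k$; combined with the trivial head bound this yields $E[Y] \leq \alpha \log_\sigma n + \cO(1) = \cO(\log_\sigma n)$.
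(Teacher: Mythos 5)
Your proof is correct and follows the same overall strategy as the paper's: a per-pair tail bound on $X_{i,j}$, a union bound over the $\cO(n^2)$ pairs, and a head/tail split of $E[Y]=\sum_{m}\Pr[Y\geq m]$ with geometric domination of the tail (the paper's version of your ratio computation is the claim that $\binom{m}{k}/\binom{m-1}{k}\leq 3/2$ for $m\geq 6k$). The genuine difference is in the key per-pair estimate. The paper proves $\Pr(X_{i,j}\geq m)\leq\binom{m}{k}\sigma^{-(m-k)}$ by a union bound over the $\binom{m}{k}$ candidate error-position sets and then multiplies the per-position match probabilities $1/\sigma$ over the remaining $m-k$ positions --- implicitly treating those match events as independent, which is not obvious when the two suffixes overlap ($|i-j|<m$) and a single letter of $x$ participates in several comparisons. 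You identify exactly this as the main obstacle and close it: decomposing the comparison indices into residue classes modulo $d=|i-j|$, noting that each class is a chain of consecutive-inequality indicators on distinct i.i.d.\ uniform letters (mutually independent by sequential conditioning --- a fact that does rely on uniformity, which the model provides), and that distinct classes use disjoint letters. This makes the Hamming distance a genuine $\mathrm{Bin}(m,1-1/\sigma)$ variable and lets you invoke a binomial tail bound in place of the paper's subset union bound; the resulting per-pair bounds ($2\binom{m}{k}(\sigma-1)^k\sigma^{-m}$ versus $\binom{m}{k}\sigma^{k}\sigma^{-m}$) are interchangeable for the rest of the argument. Your write-up is thus more rigorous on precisely the point the paper glosses over. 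One small caveat: the inequality $\sum_{i\leq k}\binom{m}{i}(\sigma-1)^i\leq 2\binom{m}{k}(\sigma-1)^k$ is not immediate for $\sigma=2$ and $k$ near $m/2$ (the successive-term ratio is only guaranteed to be $\geq\sigma-1$), but a factor $k+1$ in place of $2$ is harmless, and in the only range where the bound is actually used, $m\geq\alpha\log_\sigma n\geq 6k$, the factor $2$ is valid; your $n^{o(1)}$ estimate for $\sigma^k\binom{\alpha\log_\sigma n}{k}$ does hold uniformly over $1\leq k\leq\log n/\log\log n$, so $\alpha=3$ indeed suffices for part (b).
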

\begin{proof}[a]
Let us denote the $i$th suffix of $x$ by $x_i=x[i \dd n-1]$. 
Further let us define the following random variables: 

$$X_{i,j}={\lcp_k(x_i,x_j)} \text{ and } Y=\max\limits_{0\leq i<j\leq n-1}{X_{i,j}}.$$ 

\begin{claim}
$\Pr(X_{i,j}\geq m) \leq \genfrac(){0pt}{0}{m}{k}\frac{1}{\sigma^{m-k}}$.
\end{claim}
\begin{proof}[of Claim]
Each possible set of positions where a substitution is allowed is a subset of one of the $\genfrac(){0pt}{1}{m}{k}$ subsets of $m$ of size $k$.
For each of these subsets, we can disregard what happens in the $k$ chosen positions; in order to yield a match with $k$-errors, the remaining $m-k$ positions must match and each of them matches with probability $\frac{1}{\sigma}$. The claim follows by applying the Union-Bound (Boole's inequality).\qed
\end{proof}

\noindent By applying the Union-Bound again we have that
$$\Pr(Y\geq m)=\Pr(\bigcup\limits_{i < j}\{X_{i,j} \geq m\})\leq \sum\limits_{i < j}{\Pr(X_{i,j}\geq m)}\leq n^2 \genfrac(){0pt}{0}{m}{k}\frac{1}{\sigma^{m-k}},$$
for $m \geq k$ and $\Pr(Y\geq m)=1$ for $m \leq k$.
The expected value of $Y$ is given by:
$$E[Y]=\sum\limits_{m = 1}^{\infty}{\Pr(Y\geq m)}=
\underbrace{\sum\limits_{m = 1}^{\alpha (\log_{\sigma}+k)}{\Pr(Y\geq m)}}_{\leq \alpha (\log_{\sigma}n+k)}
+
\sum\limits_{m = \alpha (\log_{\sigma}n+k)+1}^{\infty}{\Pr(Y\geq m)}.$$

\noindent (Note that we bound the first summand using that $\Pr(Y\geq m) \leq 1$ for all $m$.)

\begin{claim}Let $r_{m,k}=\genfrac(){0pt}{0}{m}{k}$. We have that  $\frac{r_{m,k}}{r_{m-1,k}} \leq \frac{3}{2}$ for $m \geq 6k$. 
\end{claim}
\begin{proof}[of Claim] 
$$\frac{r_{m,k}}{r_{m-1,k}}=\frac{m}{(m-k-1)}\leq\frac{6k}{(6k-k-1)}=\frac{6k}{(5k-1)}\leq \frac{3}{2}.$$\qed
\end{proof}

By assuming $\beta=\alpha (\log_{\sigma}n+k)+1 \geq 6k$, for some $\alpha > 1$, we apply the above claim to bound the second summand as follows. 
$$\sum\limits_{m = \beta}^{\infty}{\Pr(Y\geq m)}\leq \sum\limits_{m = \beta}^{\infty}{n^2 \genfrac(){0pt}{0}{m}{k}\frac{1}{\sigma^{m-k}}}
\leq
\sum\limits_{m = \beta}^{\infty}{n^2 \genfrac(){0pt}{0}{6k}{k} \left(\frac{3}{2}\right)^{m-6k} \frac{1}{\sigma^{m-k}}}$$
$$=\sum\limits_{m = \beta}^{\infty}{n^2 \genfrac(){0pt}{0}{6k}{k} \left(\frac{2}{3}\right)^{5k} \left(\frac{3}{2\sigma}\right)^{m-k}}
\leq n^2 \genfrac(){0pt}{0}{6k}{k} \left(\frac{2}{3}\right)^{5k}\left(\frac{3}{2\sigma}\right)^{\beta-k}\sum\limits_{m = 0}^{\infty}{\left(\frac{3}{2\sigma}\right)^{m}}$$
$$\leq An^6\left(\frac{3}{2\sigma}\right)^{\beta-k}
\leq An^6\left(\frac{3}{2\sigma}\right)^{\alpha \log_{\sigma}n}
= \frac{An^6}{n^{\alpha(1-1/\log_{3/2}\sigma)}}$$
for some constant $A$ since $\sigma \geq 2$ and $(6k)^k \leq 2^{3 \log n} k^k \leq n^3 \log^k n= \cO(n^4)$. Then $1-1/\log_{3/2}\sigma>0$ and we can thus pick an $\alpha$ large enough such that this sum is $\cO(n^{-\epsilon})$ for any $\epsilon>0$. 
\qed 
\end{proof}
\begin{proof}[b] Let $I_{i,j,m}$ be the indicator random variable for the event $\{X_{i,j}\geq m\}$. We then have that
$$E[\sum\limits_{i < j}
{I_{i,j,m}}]=\sum\limits_{i < j}{E[I_{i,j,m}]}=\sum\limits_{i<j}
{\Pr(X_{i,j}\geq m)},$$
which we have already shown is $\cO(1)$ if $m=\alpha \log_\sigma n$ for some $\alpha>1$.\qed
\end{proof}

\subsection{Improved Algorithm for Hamming Distance}\label{sec:yfast}

The $y$-fast trie, introduced in~\cite{DBLP:journals/ipl/Willard83}, supports insert, delete and search (exact, predecessor and successor queries) in time $\cO(\log \log U)$ {\em with high probability}, using $\cO(n)$ space, where $n$ is the number of stored values and $U$ is size of the universe. We consider each substring of $x$ of length at most $\lambda=\alpha \log n$ for a constant $\alpha$ satisfying Theorem~\ref{cyril2} (b) as a $\lambda$-digit number; note that by our assumptions this number fits in a computer word.
We thus have $U=\sigma^{\lambda}$ and hence $\log\log U = \cO(\log \log n +\log \alpha) = \cO(\log \log n)$.

We initialise $\PLCP_k$ and $\POS_k$ for each $i$ based on the longest common prefix of $x[i \dd n-1]$ (i.e.~not allowing any errors) that occurs elsewhere using the \SA{} and the \LCP{} array; this can be done in $\cO(n)$ time. For each pair of suffixes that share a prefix of at least $\lambda$ we perform (at most) $k$ \LCE{} queries to find their longest common prefix allowing for $k$-errors; by Theorem~\ref{cyril2} these pairs are $\cO(1)$.

We then initialise the $y$-fast trie by inserting $x[i \dd i + \lambda-1]$ to it for each position $i$ of $x$ with $i \leq n- \lambda$. (For the rest of the positions, for which we reach the end of $x$, we insert $x[i \dd n-1]$ after some trivial technical considerations.) This procedure takes time $\cO(n\log \log n)$ in total.

We then want to find a longest prefix of the $\sigma^k \genfrac(){0pt}{1}{\lambda}{k}$ strings of length at most $\lambda$ that are at Hamming distance at most $k$ from $x[i \dd i+\lambda-1]$ that occurs elsewhere in $x$ as well as an occurrence of it. If this prefix is of length $\lambda$, we find all positions $t$ in $x$ for which $d_H(x[i \dd i+\lambda-1],x[t \dd t+\lambda-1]) \leq k$ and treat each of them individually.
We generate a subset of the $\sigma^k \genfrac(){0pt}{1}{\lambda}{k}$ strings; we avoid generating some that we already know do not occur in $x$. We only want to allow the first error at position $p$, where $i \leq p \leq i+\PLCP_0[i]$. Let us denote the substitution at position $j$ with letter $a$ by $(j,a)$. Suppose that the longest prefix of $x[i \dd n-1]$ after substitutions $(j_0,a_0),\ldots,(j_e,a_e)$ that occurs elsewhere in the string is of length $m$. We then want to allow the $(e+1)$th error at positions $j_e<p \leq i+m$; inspect Figure~\ref{fig:err} for an illustration. It should be clear that we obtain each possible sequence of substitutions at most once.

\begin{figure}[!t]
  \begin{center}
  \begin{tikzpicture}[xscale=0.6]
    \foreach \x/\c in {0/i,3/j_0,9.1/j_e,9/,13/i+m-1}{
      \draw (\x,0) node[above] {$\c$};
    }
    \draw[-latex] (0,0.7) -- (2.4,0.7);
    \draw[-latex,decorate,decoration={snake,amplitude=.4mm,segment length=2mm}] (3.8,0.7) -- (8.4,0.7);
    \draw[-latex] (10,0.7) -- (13,0.7);
    \draw (3.1,0.45) node[above] {$a_0$};
    \draw (9.1,0.45) node[above] {$a_e$};
    \draw (10.3,0.7) node[above] {$X$};
    \draw (11.3,0.7) node[above] {$X$};
    \draw (12.3,0.7) node[above] {$X$};
    \draw (13.3,0.7) node[above] {$X$};
    \draw[snake=brace] (10,1.2) -- node[above] {$p$} (13.6,1.2);
  \end{tikzpicture}
  \end{center}
  \caption{The $(e+1)$th error is any possible substitution at a position $j_e<p \leq i+m$.
  \label{fig:err}}
\end{figure}
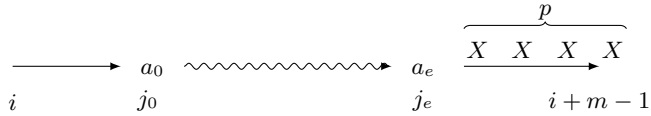

We view each string $z$ created after at most $k$ substitution operations as a number; the aim is to find its longest prefix that occurs elsewhere in $x$. To this end we perform at most three queries over the $y$-fast trie: an exact; a predecessor; and a successor query.
If the exact query is unsuccessful, then either the predecessor or the successor query will return a factor $z'$ of $x$ that attains the maximal longest common prefix that any factor of $x$ has with $z$. Note that it may be the case that $z'$ only occurs at position $i$; however in this case $\lcp_k(z,z')$ will be smaller or equal than the value currently stored at $\PLCP_k[i]$ due to how we generate each such string $z$. Hence we do not perform an invalid update of $\PLCP_k[i]$.

Having found $z'$, we can then compute the length of the longest common prefix between $z$ and $z'$ in constant time using standard bit-level operations. For clarity of presentation we assume $|z|=|z'|=\lambda$. An \textsf{XOR} operation between $z$ and $z'$ provides us with an integer $d$ specifying the positions of errors (bits set on when $d$ is viewed as binary). If $d \neq 0$, we take $\delta=\lfloor\log d\rfloor$, which provides us with the index of the leftmost bit set on which in turn specifies the length of the longest common prefix between $z$ and $z'$; specifically $\lcp_0(z,z')=\lfloor\frac{\lambda\lceil \log \sigma\rceil -\delta-1}{\lceil \log \sigma\rceil}\rfloor$. 

If $z=z'$ we perform \LCE{} queries between all suffixes of the text that have $z$ as a prefix and $x[j \dd j+\lambda-1]$; by Theorem~\ref{cyril2} we expect this to happen $\cO(1)$ times in total, so the cost is immaterial.

We have $\alpha \log n$ positions where we need to consider the $k$ errors, yielding an overall time complexity of $\cO(n \sigma^k \genfrac(){0pt}{1}{\alpha \log n}{k} \log \log n)=\cO(n \frac{(\alpha \sigma)^k}{k!} \log^k n \log\log n)$. We thus obtain the following result.

\PLCPnew*

\begin{remark}
We have that $\frac{(\alpha \sigma)^k}{k!}\leq (\alpha \sigma)^{\alpha \sigma}=\cO(1)$ and hence the required time is bounded by $\cO(n \log^k n \log\log n)$.
\end{remark}

\begin{remark}\label{rem:bille}
If $\alpha\log n > w$, where $w$ is the word size in the word-RAM model, we can make use of the {\em deterministic} data structure presented in~\cite{DBLP:conf/cpm/BilleGS17} (Theorem 1 therein), which can be built in $\cO(n)$ time for a string $x$ of length $n$ and answers predecessor queries (i.e.~given a query string $p$, it returns the lexicographically largest suffix of $x$ that is smaller than $p$) in time $\cO(\frac{|p| \log \sigma}{w}+\log |p| + \log \log \sigma)$. In particular, the queries in scope can be answered in time $\cO(\log \log n)$ per query.
\end{remark}
%
%
\subsection{Edit Distance}\label{sec:edit}

We next consider computing $\PLCP_k$ under the edit distance model; however in this case we observe that $x[i \dd n-1]$ and $x[i+j \dd n-1]$ are at edit distance $j$ for $i-k \leq j \leq i+k$. We hence alter the definition so that $\PLCP_k[i]$ refers to the longest common prefix of $x[i \dd n-1]$ with $k$-errors occurring at a position $j \notin S_{i,k}=\{i-k, \ldots, i+k\}$.

The proof of Theorem~\ref{cyril2} can be extended to allow for $k$-errors under the edit distance. In this case we have that $\Pr(X_{i,j}\geq m) \leq \genfrac(){0pt}{1}{m}{k}\frac{3^k}{\sigma^{m-k}}$; this can be seen by following the same reasoning as in the first claim of the proof with two extra considerations: (a) each deletion/insertion operation conceptually shifts the letters to be matched (giving the $3^k$ factor); (b) the letters to be matched are $m$ minus the number of deletions and substitutions and hence at least $m-k$. The extra $3^k$ factor gets consumed by $(2/3)^{5k}$ later in the proof since $2\sqrt[5]{3}/3<1$.

On the technical side, we modify the algorithm of Section~\ref{sec:yfast} as follows:
\begin{enumerate}
\item At each position, except for $\sigma-1$ substitutions, we also  consider $\sigma$ insertions and $1$ deletion. This yields a multiplicative $2^k$ factor in the time complexity. We keep counters $\textsf{ins}$ for insertions and $\textsf{del}$ for deletions; for each length we obtain, we add $\textsf{del}$ and subtract $\textsf{ins}$.
\item When querying for a string $z$ while processing position $i$ we now have to check that we do not return a position $j \in S_{i,k}$.
We can resolve this by spending $\cO(k)$ time for each position $i$; when we start processing position $i$, we create an array of size $\cO(k)$ that stores for each position $j \in S_{i,k}$ a position $f_j \notin S_{i,k}$ with the maximal longest common prefix with $x[j \dd n-1]$ using the $\SA$ and the $\LCP$ array. When a query returns a position $j \in S_{i,k}$ we instead consider $f_j$.
\item We replace the \LCE{} queries used to compute values in $\PLCP_k$ longer than $\lambda$ (that required $\cO(k)$ time in total) by the Landau-Vishkin technique~\cite{LV85} to perform extensions. For an illustration inspect Figure~\ref{fig:LV}. We initiate $2k+1$ diagonal paths in the classical dynamic programming matrix for $x[i\dd n-1]$ and $x[j\dd n-1]$. The $i$th diagonal path above and the $i$th diagonal path below the main diagonal are initialised to $i$ errors. The path starting at the main diagonal is initialised to $0$ errors. We first perform an \LCE{} query between $x[i\dd n-1]$ and $x[j+d\dd n-1]$, for all $0 \leq d \leq k$, and an \LCE{} query between $x[i+d\dd n-1]$ and $x[j\dd n-1]$, for all $1 \leq d \leq k$. Then, for all $1 \leq d \leq k$, we try to extend a path with exactly $d-1$ errors to a path with exactly $d$ errors. We perform an insertion, a deletion, or a substitution with a further \LCE{} query and pick the farthest reaching extension. The bottom-most extension of any diagonal when $d = k$ specifies the length of the longest common prefix with $k$-errors. The whole process takes time $\cO(k^2)$.
\end{enumerate}

\begin{figure}[t]
\begin{center}
\begin{tikzpicture}[scale=0.7]
	\draw (1,1) rectangle (7,7);
    \draw[very thick] (1,5.5) -- (5.5,1);
    \draw[very thick] (2.5,7) -- (7,2.5);

    \foreach \x/\c in {1.2/j,2.5/j+k,7/n-1}{
      \draw (\x,7) node[above] {\footnotesize $\c$};
    }   
        \foreach \x/\c in {6.8/i,5.5/i+k,1/n-1}{
      \draw (1,\x) node[left] {\footnotesize $\c$};
    }
\draw (4,8.5) node {$x[j\dd n-1]$};
\draw (0,4) node[left] {$x[i\dd n-1]$};
\end{tikzpicture}
\end{center}
\caption{We need to perform $k$ extension steps in $2k+1$ diagonals of the dynamic programming matrix for $x[i\dd n-1]$ and $x[j\dd n-1]$.}
\label{fig:LV}
\end{figure}
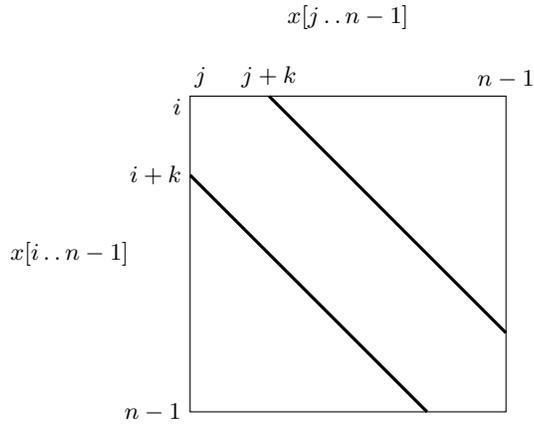

\section{Genome Mappability Data Structure}


The \emph{genome mappability} problem has already been studied under the Hamming distance model~\cite{biopaper,Manzini:2015:LCP:2952649.2952678,1map}. We can also define the problem under the edit distance model. Given a string $x$ of length $n$ and integers $m<n$ and $k<m$, we are asked to count, for each length-$m$ substring $x[i \dd i+m-1]$ of $x$, the number \textit{occ} of other substrings of $x$ occurring at a position $j \notin S_{i,k}=\{i-k, \ldots, i+k\}$ that are at edit distance at most $k$ from $x[i \dd i+m-1]$. We then say that this substring has $k$-mappability equal to $\textit{occ}$. Specifically, we consider a data structure version of this problem~\cite{Sofsem2018}. Given $x$ and $k$, construct a data structure, which, for a query value $\mu$ given on-line, returns the minimal value of $m$ that forces at least $\mu$ length-$m$ substrings of $x$ to have $k$-mappability equal to $0$. 


\begin{theorem}[\cite{Sofsem2018}]\label{thm:gmDS}
An $\cO(n)$-sized data structure answering genome mappability queries in $\cO(1)$ time per query can be constructed from $\PLCP_k$ in time $\cO(n)$.
\end{theorem}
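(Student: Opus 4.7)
My plan is to reduce each query to a single table look-up by precomputing, for every $\mu \in \{1,\ldots,n\}$, the minimum $m$ such that at least $\mu$ length-$m$ substrings of $x$ have $k$-mappability $0$. Writing $p_i := \PLCP_k[i]$, the key observation is that $x[i\dd i+m-1]$ has $k$-mappability $0$ if and only if $p_i < m$ and $m \le n - i$, which is immediate from the (altered) definition of $\PLCP_k$ in the edit-distance setting. Equivalently, position $i$ contributes to the count $N(m)$ of length-$m$ substrings of $x$ with $k$-mappability $0$ precisely for $m \in \{p_i+1,\ldots,n-i\}$ (and for no $m$ at all if $p_i \ge n-i$).

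First, I would materialise $N[1\dd n]$ in $\cO(n)$ time via the classical difference-array trick: set $\Delta[p_i+1] \mathrel{+}= 1$ and $\Delta[n-i+1] \mathrel{-}= 1$ for every $i$, and then take prefix sums. Second, since $N(m)$ need not be monotone in $m$ (incrementing $m$ discards position $n-m$ from the eligible set while potentially turning several new positions into $0$-mappability contributors, so the count may oscillate), I would form the running maximum $\tilde N[m] := \max_{m' \le m} N[m']$. This array is non-decreasing and satisfies $\min\{m : N[m] \ge \mu\} = \min\{m : \tilde N[m] \ge \mu\}$, so it carries all the information we need. Third, I would sweep $m$ from $1$ to $n$ and, whenever $\tilde N[m] > \tilde N[m-1]$, assign $\mathrm{ans}[\mu] \gets m$ for every $\mu \in \{\tilde N[m-1]+1,\ldots,\tilde N[m]\}$. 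Each $\mu$ is touched at most once, so the total construction cost is $\cO(n)$ and the data structure occupies $\cO(n)$ words; a query with parameter $\mu$ returns $\mathrm{ans}[\mu]$ in $\cO(1)$.

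I do not foresee a real obstacle here. The only slightly delicate point is the non-monotonicity of $N$ in $m$, which is precisely what the prefix-maximum step $\tilde N$ handles; everything else is routine bookkeeping over the entries of $\PLCP_k$.
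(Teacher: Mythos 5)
Your proposal is correct, and it is the standard counting argument behind this result (which the paper itself only cites from~\cite{Sofsem2018} without reproducing a proof): the equivalence ``$x[i\dd i+m-1]$ has $k$-mappability $0$ iff $m>\PLCP_k[i]$ and $m\le n-i$'', a difference-array/prefix-sum computation of the counts $N[m]$, and a prefix-maximum plus one linear sweep to tabulate the answers for all $\mu$. The two points you flag yourself are exactly the ones that need care --- skipping positions $i$ with $p_i\ge n-i$ (so the empty interval does not inject a spurious $-1$ into the difference array) and the non-monotonicity of $N$ in $m$ --- and your prefix-maximum $\tilde N$ handles the latter correctly, so I see no gap.
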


By combining Theorem~\ref{the:PLCP_new} with Theorem~\ref{thm:gmDS} we obtain the first efficient algorithm for the genome mappability data structure under the edit distance model.

\section{Longest Common Substring with $k$-Errors}\label{sec:kLCF}

In the {\em longest common substring with $k$-errors} problem  we are asked to find the longest substrings of two strings that are at distance at most $k$. 
The Hamming distance version has received much attention due to its applications in computational biology~\cite{DBLP:journals/jcb/UlitskyBTC06,kmacs,ALFRED}.
Under edit distance, the problem is largely unexplored.
The {\em average $k$-error common substring} is an alignment-free method based on this notion for measuring string dissimilarity under Hamming distance; we denote the induced distance by $\text{Dist}_k(x,y)$ for two strings $x$ and $y$ (see~\cite{DBLP:journals/jcb/UlitskyBTC06} for the definition). $\text{Dist}_k(x,y)$ can be computed in time $\cO(|x|+|y|)$ from arrays $\Lambda_{x,y}$ and $\Lambda_{y,x}$, defined as

$$\Lambda_{x,y}[i]=\max_{0 \leq j \leq |y|-1}(\lcp_k(x[i\dd |x|-1],y[j\dd|y|-1])).$$

A worst-case and a more practical average-case algorithm for the computation of $\Lambda_{x,y}$ have been presented in~\cite{DBLP:journals/jcb/ThankachanAA16,ALFRED}. This measure was extended to allow for {\em wildcards} (don't care letters) in the strings in~\cite{DBLP:journals/nar/HorwegeLBHKLM14}. Here we provide a natural generalisation of this measure: the average $k$-error common substring under the edit distance model. The sole change is in the definition of $\Lambda_{x,y}[i]$: except for substitution, we also allow for insertion and deletion operations. 


The algorithm of Section~\ref{sec:edit} can be applied to compute $\Lambda_{x,y}$ under the edit distance model within the same complexities. We start by constructing the $y$-fast trie for $y$. We then do the queries for the suffixes of $x$; we now also check for an exact match (i.e.~for $x[i \dd i+\alpha \log (|x|+|y|)-1]$). We obtain the following result.

\begin{theorem}\label{thm:lcs}
Given two strings $x$ and $y$ of length at most $n$ and a distance threshold $k$, arrays $\Lambda_{x,y}$ and $\Lambda_{y,x}$ and $\text{Dist}_k(x,y)$ can be computed in average-case time $\cO(n \frac{c^k}{k!} \log^k n \log \log n) $, where $c$ is a constant, using $\cO(n)$ extra space.
\end{theorem}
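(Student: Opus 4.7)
The plan is to adapt the single-string edit-distance algorithm of Section~\ref{sec:edit} to the two-string setting needed for $\Lambda_{x,y}$. I first observe that Theorem~\ref{cyril2} extends to pairs consisting of a suffix of $x$ and a suffix of $y$: since the letters of both strings are i.i.d.~uniform over $\Sigma$, the bound $\Pr(\lcp_k(x[i\dd|x|-1],y[j\dd|y|-1]) \geq m) \leq \binom{m}{k}\frac{3^k}{\sigma^{m-k}}$ still holds for edit distance (by the argument reproduced in Section~\ref{sec:edit}), and a Union-Bound over the at most $|x|\cdot|y| \leq n^2$ pairs yields an expected maximum of $\cO(\log_\sigma n)$ with only $\cO(1)$ pairs attaining a common prefix with $k$-errors of length at least $\alpha \log_\sigma n$, for a sufficiently large constant $\alpha$.

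Next, I would set $\lambda = \alpha\log(|x|+|y|)$ and build the $y$-fast trie over all length-$\lambda$ substrings of $y$ (with the suffixes of $y$ shorter than $\lambda$ handled via the same boundary technicalities as in Section~\ref{sec:yfast}). For each position $i$ of $x$, I would enumerate strings at edit distance at most $k$ from $x[i\dd i+\lambda-1]$ exactly as in Section~\ref{sec:edit}: the $(e+1)$th edit is restricted to positions strictly after the $e$th edit and within the current longest match found in $y$, while insertions and deletions are tracked by the counters \textsf{ins} and \textsf{del} to adjust the effective queried length. For each generated string $z$, three $y$-fast trie operations (exact, predecessor, successor) locate the length-$\lambda$ factor of $y$ that maximises $\lcp_0(z,\cdot)$, after which the prefix length is obtained in $\cO(1)$ via the bitwise trick of Section~\ref{sec:yfast}. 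A small but crucial difference from the single-string case is that I now keep the exact-match answer rather than discarding it: since $i$ indexes $x$ while the returned position indexes $y$, the self-exclusion issue of $\PLCP_k$ does not arise (and there is no need for the array of replacement positions $f_j$).

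For the $\cO(1)$ expected pairs whose $\lcp_k$ reaches $\lambda$, I would invoke the Landau-Vishkin extension on $x[i\dd|x|-1]$ and $y[j\dd|y|-1]$ at $\cO(k^2)$ cost each, so the total extension cost is absorbed. Summing over the $n$ positions of $x$, the $\sigma^k\binom{\lambda}{k}$ generated strings per position, the $2^k$ factor for insertions and deletions, and the $\cO(\log\log n)$ cost per $y$-fast trie operation gives the claimed $\cO(n\frac{c^k}{k!}\log^k n \log\log n)$ expected time in $\cO(n)$ extra space. The array $\Lambda_{y,x}$ is obtained symmetrically by swapping the roles of $x$ and $y$, and $\text{Dist}_k(x,y)$ follows in an additional $\cO(|x|+|y|)$ time from the two arrays as in~\cite{DBLP:journals/jcb/UlitskyBTC06}.

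The main obstacle will be verifying that the restricted error-placement scheme remains complete in the two-string setting: ``the current longest match'' must now be interpreted relative to occurrences in $y$ rather than in $x$, and one has to check that every optimal alignment of $x[i\dd i+\lambda-1]$ against some length-$\lambda$ factor of $y$ is generated exactly once. This is essentially the argument implicit in Section~\ref{sec:yfast} with $y$ in place of $x$, but it must be spelled out to justify that no achievable $\Lambda_{x,y}[i]$ value is missed by the non-redundant enumeration.
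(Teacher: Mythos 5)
Your proposal is correct and follows essentially the same route as the paper, whose own justification is only a brief sketch: build the $y$-fast trie over the factors of $y$, run the Section~\ref{sec:edit} query procedure for the suffixes of $x$, and additionally retain the exact-match answer since the self-exclusion issue of $\PLCP_k$ does not arise between two distinct strings. The details you supply (extending Theorem~\ref{cyril2} to cross-string pairs via the same union bound, the Landau--Vishkin extensions for the $\cO(1)$ expected long matches, and the symmetric computation of $\Lambda_{y,x}$) are exactly what the paper leaves implicit.
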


\begin{remark}
By applying Theorem~\ref{thm:lcs} we essentially solve the longest common substring with $k$-errors for $x$ and $y$ within the same complexities.
\end{remark}

\section{All-Pairs Suffix/Prefix Overlaps with $k$-Errors}

Given a set of strings and an error-rate $\epsilon$, the goal is to find, for all pairs of strings, their suffix/prefix matches (overlaps) that are within distance $k=\lceil \epsilon\ell \rceil$, where $\ell$ is the length of the overlap~\cite{DBLP:journals/jcb/RasmussenSM06,DBLP:journals/iandc/ValimakiLM12,DBLP:conf/spire/KucherovT14}. 

Using our technique but only inserting {\em prefixes} of the strings in the $y$-fast trie and querying for all starting positions ({\em suffixes}) in a similar manner as in Section~\ref{sec:exp}, we obtain the following result.

\begin{theorem}\label{thm:ove}
Given a set of strings of total length $n$ and a distance threshold $k$, the length of the maximal longest suffix/prefix overlaps of every string against all other strings within distance $k$ can be computed in average-case time $\cO(n \frac{c^k}{k!} \log^k n \log \log n) $, where $c$ is a constant, using $\cO(n)$ extra space.
\end{theorem}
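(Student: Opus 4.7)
The plan is to reuse essentially verbatim the machinery of Sections~\ref{sec:yfast} and~\ref{sec:edit}, but with an asymmetric roles assigned to the two endpoints of each overlap: prefixes act as ``database'' entries in the $y$-fast trie, while all suffixes of all strings act as ``queries''. Concretely, let $x_1,\ldots,x_s$ be the input strings of total length $n$ and fix $\lambda=\alpha \log n$ with $\alpha$ as in Theorem~\ref{cyril2}(b). I would insert, for every $x_t$ and every length $\ell \leq \lambda$, the number corresponding to the prefix $x_t[0\dd \ell-1]$ into a $y$-fast trie (padded/tagged so that different lengths compare correctly, or by inserting only the length-$\lambda$ prefix and handling short strings with the same boundary trick as in Section~\ref{sec:yfast}). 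This preprocessing costs $\cO(n \log \log n)$ time and $\cO(n)$ space.

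Then, for each position $i$ in each string $x_t$ (which represents a candidate ``suffix side'' of an overlap), I would enumerate the $\sigma^k\binom{\lambda}{k}$ strings at Hamming distance (resp.\ the $\cO((3\sigma)^k \binom{\lambda}{k})$ strings at edit distance) at most $k$ from $x_t[i\dd i+\lambda-1]$, using the same incremental scheme from Section~\ref{sec:yfast} (only allowing the $(e+1)$-th error at a position not past the current longest matching prefix) so that no substitution sequence is generated twice. For each such generated string $z$, I would perform an exact, a predecessor and a successor query on the $y$-fast trie in $\cO(\log \log n)$ total time, and recover $\lcp_0$ with the returned candidate by the bit-level $\textsf{XOR}$-and-$\lfloor\log\cdot\rfloor$ trick, which translates into a $\lcp_k$ value between $x_t[i\dd]$ and a prefix of some $x_{t'}$. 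Updating, for each ordered pair $(t,t')$, the current best overlap length is then trivial.

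For overlaps of length exceeding $\lambda$ I would, exactly as before, rely on Theorem~\ref{cyril2}(b): the expected number of (suffix of $x_t$, prefix of $x_{t'}$)-pairs that agree with $k$-errors on the first $\lambda$ characters is $\cO(1)$, since such pairs are a subset of the pairs of suffixes of the concatenation $x_1\#_1 x_2 \#_2 \cdots \#_{s-1} x_s$ sharing a $k$-error prefix of length $\geq \lambda$ (the separators only shorten matches). For each such rare event I would extend the match using $k$ \LCE{} queries in the Hamming model or the Landau--Vishkin $\cO(k^2)$ extension in the edit-distance model, exactly as in Section~\ref{sec:edit}; the total expected cost of these extensions is negligible.

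Summing over all $n$ starting positions, the cost is dominated by the trie queries, giving $\cO(n \sigma^k \binom{\alpha\log n}{k}\log\log n) = \cO(n \frac{c^k}{k!}\log^k n \log\log n)$, which is the claimed bound. The one delicate step, and the place where I would spend most care, is verifying that the expected-length analysis of Theorem~\ref{cyril2} still applies here despite the asymmetric setup: we are comparing suffixes against \emph{prefixes} of other strings rather than suffixes against suffixes of the same string, but since the letters of the input are i.i.d.\ uniform, each individual pair $(x_t[i\dd],x_{t'}[0\dd])$ satisfies $\Pr(X \geq m)\leq \binom{m}{k}\sigma^{-(m-k)}$ by the very same counting argument, and the union bound over at most $n^2$ such pairs yields the same $\cO(\log_\sigma n)$ expected maximum and the same $\cO(1)$ bound on the number of ``long'' matches, so the analysis carries over unchanged.
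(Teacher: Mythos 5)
Your proposal is correct and follows essentially the same route as the paper, which itself only states that one should insert the \emph{prefixes} of the strings into the $y$-fast trie and query with all starting positions (suffixes) in the manner of Sections~\ref{sec:yfast} and~\ref{sec:edit}; you supply the details (the padding of short prefixes, the reduction of the probabilistic analysis to the same per-pair bound under the i.i.d.\ assumption, and the handling of long overlaps via Theorem~\ref{cyril2}(b) with \LCE{}/Landau--Vishkin extensions) that the paper leaves implicit. The only point to tighten is the bookkeeping: maintain one running maximum per string rather than per ordered pair of strings, so as not to jeopardise the $\cO(n)$ space bound when the number of strings is large, and record an overlap for the pair only when the computed $\lcp_k$ value reaches the full length of the queried suffix.
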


\bibliographystyle{plain}
\bibliography{references}

\end{document}